\newtheorem{theorem}{Theorem}
\title{\huge\bf A multi-candidate electronic voting scheme with unlimited participants}
\author{Xi Zhao$^1$, Yong Ding$^2$, Quanyu Zhao$^3$\\
\medskip {\small\it(Corresponding author: Yong Ding)}~\\
{\normalsize  School of Mathematics and Computing Science,
\&  Guilin University of Electronic Technology$^1$}\\
{\normalsize Guilin 541004, China}\\
{\normalsize School of Computer Science and Information Security \& Guilin University of Electronic Technology$^2$}\\
{\normalsize Guilin 541004, China}\\
{\normalsize School of Mathematics and Computing Science \&  Guilin University of Electronic Technology$^3$}\\
{\normalsize Guilin 541004, China}\\
{\normalsize (Email: stone\_dingy@126.com)}\\
}
\date
\begin{document}
\maketitle
\thispagestyle{headings}
\begin{abstract}
In this paper a new multi-candidate electronic voting scheme is constructed with unlimited participants.
The main idea is to express a ballot to allow voting for up to $k$ out of the $m$ candidates and unlimited participants.
The purpose of vote is to select more than one winner among $m$ candidates.
Our result is complementary to the result by Sun peiyong$'$ s scheme, in the sense, their scheme is not amenable for large-scale electronic voting due to flaw of ballot structure.
In our scheme the vote is split and hidden, and tallying is made for $G\ddot{o}del$ encoding in decimal base without any trusted third party, and the result does not rely on any traditional cryptography or computational intractable assumption.
Thus the proposed scheme not only solves the problem of ballot structure, but also achieves the security including perfect ballot secrecy, receipt-free, robustness, fairness and dispute-freeness.
\vspace*{0.1cm}
~\\
{\it Keywords: electronic voting; ballot structure; $G\ddot{o}del$ encoding}
\end{abstract}

\section{Introduction}
Electronic voting plays an important role in the cryptographic applications.
In recent years, the traditional voting has been gradually replaced by electronic voting.
An e-voting scheme is a set of protocols that allow a collection of voters to cast their votes, while enabling a collection of authorities to collect the voters, compute the final tally, and communicate the final tally that is checked by tally-clerk.
Generally e-voting protocols adopt the group signature~\cite{mh2009jns}, verifiable password sharing, homomorphic encryption~\cite{cp2011jns} and limited commitment protocol, etc. In 1981, the first e-voting is proposed by Chaum D~\cite{cd1981jns}.
In 1996, Cramer~\cite{rb1996jns} posed the problem of multi-candidate vote. Based on the ElGamal homomorphic encryption, Cramer el al.~\cite{rrb1997jns} proposed a $1-out-of-m$ electronic voting scheme.
However, this scheme is difficult to avoid the situation that same voters repeatedly vote, and it cannot achieve the $k-out-of-m$ e-voting.
In 2001, Damgard el al.~\cite{im2001jns} proposed a $k-out-of-m$ voting scheme based on the Paillier system, while it has the same problem as the Cramer scheme.
In 2006, a multi-candidate e-voting scheme (called Zhong-Huang scheme) was proposed by Zhong et al~\cite{zh2006jns}.
The scheme adopted a secure sum protocol~\cite{kh1980jns, hc1990jns, de1981jns} to achieve the $k-out-of-m$ voting scheme and solve the problem as the Cramer scheme. Nevertheless, by analysis, Zhong-Huang scheme has a flaw that the flaw could disclose the information of voters due to the ballot construction.
To overcome the flaw, in 2012, a secure e-voting scheme with multi-candidate (called Sun-Liu scheme) was proposed by Sun et al~\cite{sl2012jns}.The scheme achieves the perfect ballot secrecy by adopting the random number to blind the vote. However, this scheme is only suitable for small-scale voting due to the flaw of ballot construction.

In this paper, a multi-candidate electronic voting scheme with unlimited participants was proposed by using the $G\ddot{o}del$ encoding to tally vote. Firstly the identity of candidate is masked by using prime number and the process is operated in decimal. Then the ballot of voter and random number are distributed to other voters by secure channel. Finally the result is computed by every voter and is broadcasted to all participants. In this protocol, every voter can tally to achieve the fairness and unlimited participants, perfect ballot secrecy, receipt-free, robustness and dispute-freeness.

In section 2, a brief introduction of $G\ddot{o}del$ coding is introduced and some conditions and properties are defined. The scheme of Sun-Liu is presented in detail and a simple example is given in Section 3. In section 4 the new scheme is presented in detail and the experimental results are shown in section 5. Section 6 gives the performance analysis of the presented scheme. The conclusions are described in Section 7.

\section{Preliminary knowledge}

\subsection{$G\ddot{o}del$ coding}
In 1931,$G\ddot{o}del$ used a system based on prime factorization and assigned a unique natural number to each basic symbol in the formal language of arithmetic. To encode an entire formula, the following system was used by $G\ddot{o}del$.

Given a sequence $(x_1,x_2,\ldots,x_n)$ of positive integers, the $G\ddot{o}del$ encoding of the sequence is the product of the $n$ primes raised to their corresponding values in the sequence: $enc(x_1,x_2,\ldots,x_n)=2^{x_1}3^{x_2}\ldots p_n^{x_n}$.

According to the fundamental theorem of arithmetic, any number (and, in particular, a number obtained in this way) can be uniquely factored into prime factors, so it is possible to recover the original sequence from its $G\ddot{o}del$ numbers (for any given number n of symbol to be encoded).

\subsection{Basic definition}
\begin{description}
\item[Definition 1:]In the protocol, the semi-honest person fully obeys the regulation of e-voting. However the semi-honest person could collect all records and try to infer information of other voters in the acting process. This semi-honest person is also called the passive attacker.In this paper, involved members are assumed to be semi-honest members.

\item[Definition 2:]The e-voting scheme with unlimited participants is called secure and feasible, if it satisfies the following properties.

 (1) Unlimited participants: The e-voting scheme allows for unlimited participants.

 (2) Perfect ballot secrecy: If there are $t(t<n)$ voters have been colluded among $n$ voters, $t$ voters only get their own vote and cannot get the voting result of other voters.

 (3) Receipt-free: Voters are unable to provide his vote to outside world and the vote data cannot be proved to other people.

 (4) Robustness: Voters cannot interrupt the normal process of the protocol.

 (5) Fairness: All voters can get the voting result, and a single individual cannot tally in advance.

 (6) Dispute-freeness: Any voter can verify the correctness of the result and can test the honesty of voters.
\end{description}

\section{Sun - Liu Scheme and Analysis of applicability}

\subsection{ Sun-Liu scheme}

In Sun-Liu scheme, $n$ voters $(P_1,P_2,\ldots,P_n)$ are set with equal status, same rights, and were registered before the vote. There are $m$ candidates $(C_1,C_2,\ldots,C_m)$, and the ballot format is $00\ldots0a_j$. The bit length is $k$, the last bit is $a_j$, and the remaining bits are $0$. $k=[\log_{2}n]+1$, so the total bit length is $mk$. The value of $a_j$ depends on the wish of voters. Only when voters agree with the candidate, $a_j=1$; otherwise $a_j=0$. The voting scheme is divided into three stages: voting, casting, and tallying. The ballot format is shown in Table 1 .
\begin{table}[!hbpt]
\caption{The ballot format to support multi-candidate} \label{t01}
\begin{center}
\begin{tabular}{cc} \hline
  Candidate  & Ballot Format \\ \hline
  $C_1$      & $00\ldots0a_1$\\
  $C_2$      & $00\ldots0a_2$ \\
  $\vdots$   & $\vdots$       \\
  $C_m$      & $00\ldots0a_m$ \\ \hline
\end{tabular}
\end{center}
\end{table}

(1)	Voting

$P_i(i=1,2,\ldots,n)$ makes a vote for candidates $C_j(j=1,2\ldots,m)$ in the electronic vote. The value of $a_j(j=1,2,\ldots,m)$ is $1$ or $0$. Then according to the binary sequence, $P_i(i=1,2,\ldots,n)$ gets a decimal number $v_i$.

(2)	Casting

$P_i(i=1,2,\ldots,n)$ randomly selects a decimal $R_i$ and gets $S_i=v_i+R_i$. Then $S_i$ is divided into $n$ smaller integers $s_{ij}$, such that $S_i=\sum\limits_{i=1}^{n}s_{ij}$. $s_{ij}$ is sent to other voters $P_j (j=1,2,\ldots,n,j\neq i)$ by a secure channel\cite{Wang:2016:BHP:2925426.2926256}\cite{WANG2017219}. $P_j$ calculates $S_i^{'}=\sum\limits_{j=1}^{n}s_{ij}$ after receiving the $n-1$ random numbers $s_{ij}(i=1,2,\ldots,n,i\neq j)$.Then $P_i(i=1,2,\ldots,n)$ calculates $v_i^{'}=S_i^{'}-R_i$.

(3)	Tallying

$v_i^{'}$ is broadcasted to the other voters. Then $P_i(i=1,2,\ldots,n)$ calculates the sum $T=\sum\limits_{i=1}^{n}v_i^{'}$. $T$ from:

$T=\sum\limits_{i=1}^{n}v_i{'}=\sum\limits_{i=1}^{n}(S_i^{'}-R_i)=\sum\limits_{i=1}^{n}S_i^{'}-\sum\limits_{i=1}^{n}R_i
=\sum\limits_{i=1}^{n}\sum\limits_{j=1}^{n}s{ij}-\sum\limits_{i=1}^{n}R_i=
\sum\limits_{j=1}^{n}\sum\limits_{j=1}^{n}s{ij}-\sum\limits_{i=1}^{n}R_i=\sum\limits_{i=1}^{n}S_i-\sum\limits_{i=1}^{n}R_i
=\sum\limits_{i=1}^{n}(S_i-R_i)=\sum\limits_{i=1}^{n}v_i$.

Finally, $P_i(i=1,2,\ldots,n)$ is able to get the binary sequence from $T$,then the binary sequence is intercepted in every $k$ bits. The voting result of $C_j(j=1,2,\ldots,m)$ is obtained.
\subsection{ Example: 7 voters vote for 3 candidates}

The example of 7 voters vote for 3 candidates is used to explain the Sun-Liu scheme. The scheme is divided into 3 steps.

(1)	Voting

The voting result is shown in Table 2

\begin{table}[!hbpt]
\caption{7 ballots for 3 candidates} \label{t02}
\begin{center}
\begin{tabular}{cccccc} \hline

   &\multicolumn{3}{c}{Candidate} &\multicolumn{2}{c}{Vote} \\ \cline{2-6}
  Voter   &A   &B  &C  &Binary    &Decimal       \\ \hline
  $P_1$   &001&000&001 &001000001 &65 \\
  $P_2$   &000&000&001 &000000001 &1 \\
  $P_3$   &001&000&000 &001000000 &64   \\
  $P_4$   &000&001&001 &000001001 &9 \\
  $P_5$   &001&001&001 &001001001 &73    \\
  $P_6$   &000&001&000 &000001000 &8     \\
  $P_7$   &000&001&000 &000001000 &8  \\ \hline
\end{tabular}
\end{center}
\end{table}

(2)	Casting

Step 1: Selecting a random number $R_i$ and calculate $S_i$ from:
$$S_i=v_i+R_i$$

The result is shown in Table 3
\begin{table}[!hbpt]
\caption{The result of ballots randomization} \label{t03}
\begin{center}
\begin{tabular}{cccccccc} \hline

          &$P_1$&$P_2$&$P_3$&$P_4$&$P_5$&$P_6$&$P_7$      \\ \hline
  $v_i$   &65   &1    &64   &9    &73   &8    &8 \\
  $R_i$   &10   &12   &2    &6    &11   &25   &6 \\
  $S_i$   &75   &13   &66   &15   &84   &33   &14 \\ \hline
\end{tabular}
\end{center}
\end{table}

Step 2: $S_i$ is divided into seven smaller sub-digits by $P_i (i=1,2,\ldots,7)$ and is sent to other voters. The transfer matrix is shown in Figure.1

\begin{figure}[!hbpt]
\begin{center}
  \includegraphics[width=40mm,height=35mm]{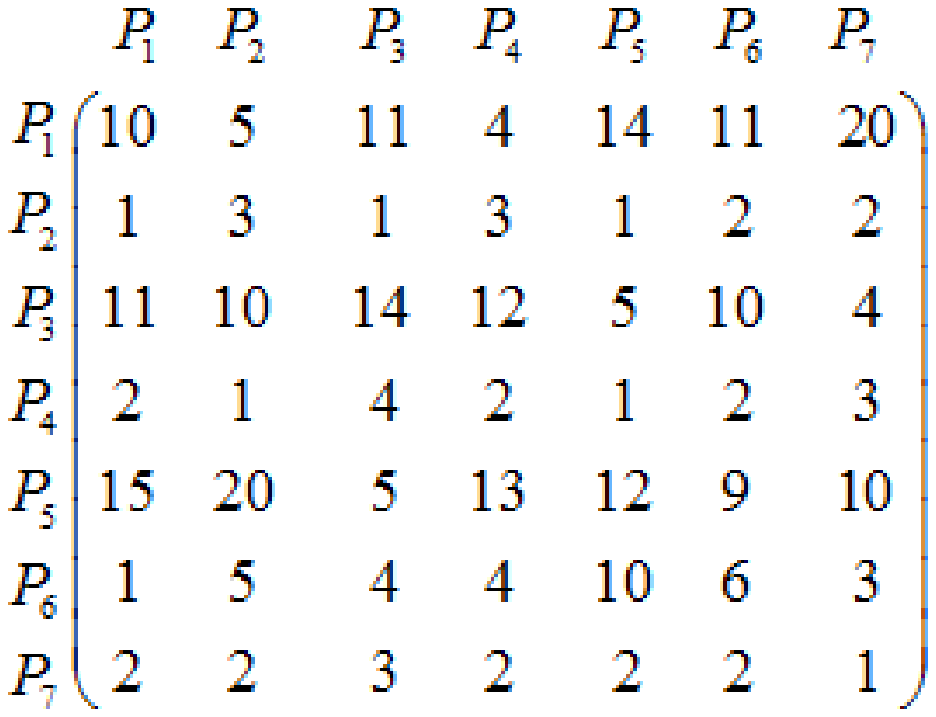}
  \caption{ The transfer matrix of Sun-Liu scheme}
  \label{f001}
\end{center}
\end{figure}

Step 3: Calculate the sum of $j$th list from:
$$S_i^{'}=\sum\limits_{j=1}^{n}s_{ij} $$

Here $s_{ij}$ is an element of $S_i$ divided in Figure.1.

Step 4: Calculate the result of derandomization from:
$$v_i^{'}=S_i^{'}-R_j$$

Step 5: Broadcast $v_i^{'}$, the result is shown in Table 4.

\begin{table}[!hbpt]
\caption{Voter cuts the random number } \label{t04}
\begin{center}
\begin{tabular}{cccccccc} \hline

              &$P_1$&$P_2$&$P_3$&$P_4$&$P_5$&$P_6$&$P_7$      \\ \hline
  $S_i^{'}$   &42   &46   &42   &40   &45   &42   &43 \\
  $R_i$       &10   &12   &2    &6    &11   &25   &6 \\
  $v_i^{'}$   &32   &34   &40   &34   &34   &17   &37 \\ \hline
\end{tabular}
\end{center}
\end{table}

(3)	Tallying

he amount of votes is $T=\sum\limits_{i=1}^{7}v_i^{'}=32+34+40+40+38+12+37=228$,then sequence $011100100$is got from 228 and is intercepted in every 3 bits. So the voting result of A, B, and C are 3, 4, and 4 respectively.

\subsection{Adaptive analysis of Sun-Liu scheme}

In Sun-Liu scheme, the bit length of vote is $mk=m(\log_2n+1)$, when $mk>64$, the bit length will lead to more consumption of computing time. So in $mk>64$, the method is low efficiency. If $m=9,n=1000$, the binary bit length is $mk=90$, the situation has more  consumption of computing time than $mk<64$.

\section{A multi-candidate electronic voting scheme with unlimited participants}

In view of the boundedness of the Sun-Liu scheme, the ballot structure is improved to achieve the aim of unlimited participants by the multiplication in the finite field $F_p$ ($p$ is a big prime). The process is following:

(1)	Voting

$P_1,P_2,\ldots,P_m$ (a total of $m$ primes) are respectively $m$ identities of candidates. Voter $L_i(i=1,2,\ldots,n)$
makes vote for $m$ candidates. If the candidate is agreed, $L_i(i=1,2,\ldots,n)$ selects the prime which represents the identity of candidate, otherwise the number ��1�� is selected instead.

(2)	Casting

$L_i(i=1,2,\ldots,n)$ randomly selects arbitrary primes to be regarded random number $R_i$ in the $m$ primes. $L_i(i=1,2,\ldots,n)$ sends $s_{ij}$ to other voters $L_j(j=1,2,\ldots,n,j\neq i)$ by secure channel, where $s_{ij}$ consists of the selected prime, $R_i$, and $1$. $L_j(j=1,2,\ldots,n)$ receives $s_{ij}(i=1,2,\ldots,n,i\neq j)$ from $n-1$ voters, and the process could be showed by transfer matrix Figure.2.

\begin{figure}[!hbpt]
\begin{center}
  \includegraphics[width=40mm,height=35mm]{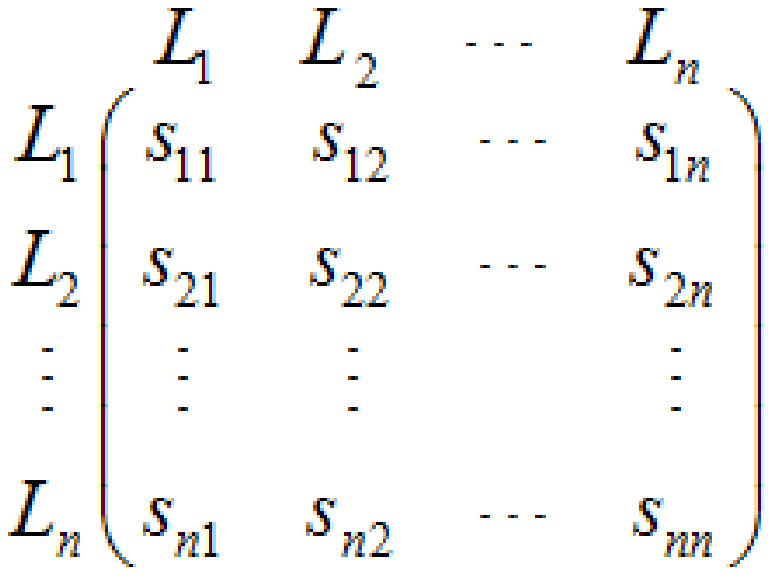}
  \caption{ Transfer matrix}
  \label{f002}
\end{center}
\end{figure}

(3)	Tallying

$L_i(i=1,2,\ldots,n)$ gets the result $v_i=2^{x_1^{'}}3^{x_2^{'}}\dots m^{x_n^{'}}R_i^{-1}$  by calculating ($R_i^{-1}$ is inverse element of $R_i$) and broadcast to $L_j(j=1,2,\ldots,n,j\neq i)$ by secure channel. $L_j$ calculates the product $T$ after receiving other $n-1$ results, namely $T=\prod_{i=1}^{n}v_i=2^{x_1}3^{x_2}\ldots f^{x_n}$, the sequence $(x_1,x_2,\ldots,x_n)$ is the voting result of corresponding candidate respectively.

\section{Example}

\subsection{Example 1}

7 voters vote for 3 candidates

(1)	Voting

The situation of vote is shown in Table 5

\begin{table}[!hbpt]
\caption{7 ballots for 3 candidates} \label{t05}
\begin{center}
\begin{tabular}{ccccc} \hline

   &$P_1$&$P_2$&$P_3$ &Select \\ \cline{2-4}
  Voter   &2   &3  &5   &Random number       \\ \hline
  $L_1$   &2   &1  &5  &2,3 \\
  $L_2$   &1   &1  &5  &3,5 \\
  $L_3$   &2   &1  &1  &5   \\
  $L_4$   &1   &3  &5  &3 \\
  $L_5$   &2   &3  &5  &2,5    \\
  $L_6$   &1   &3  &1  &2,3     \\
  $L_7$   &1   &3  &1  &2,3  \\ \hline
\end{tabular}
\end{center}
\end{table}

(2)	Casting

Step 1: $L_i (i=1,2,\ldots,7)$ gets data $s_{ij}$ and broadcasts to other voters $L_j (j=1,2,\ldots ,n,j\neq i)$ by secure channel. The transfer matrix is shown in Figure.3.

\begin{figure}[!hbpt]
\begin{center}
  \includegraphics[width=40mm,height=30mm]{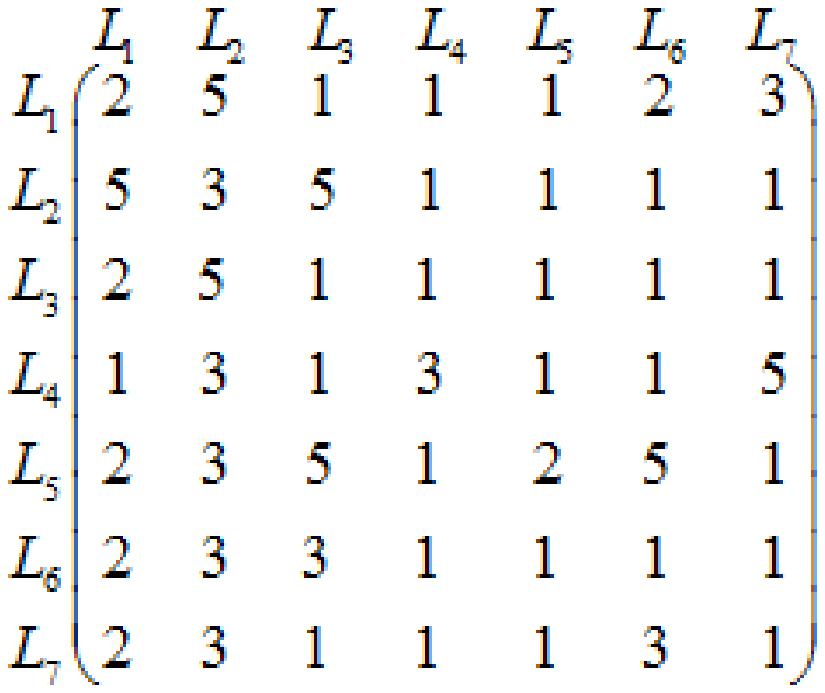}
  \caption{The transfer matrix}
  \label{f003}
\end{center}
\end{figure}

Step 2: After calculating the product of $ith$ list and broadcasts to other voters, the result is :

$$v_1=2^4\cdot5^1\cdot3^{-1}$$
$$v_2=3^4\cdot5^1$$
$$v_3=3^1\cdot5^1$$
$$v_4=1$$
$$v_5=5^{-1}$$
$$v_6=5^1$$
$$v_7=5^1\cdot2^{-1}$$

(3)	Tallying

The amount of vote is $T=\prod_{i=1}^{n}v_i=v_1v_2v_3v_4v_5v_6v_7=2^3\cdot3^4\cdot5^4$, so the voting result of $P_1,P_2,P_3$ are 3, 4, 4 respectively.

\subsection{Example 2}

1000 voters vote for 9 candidates

(1)	Voting

The voting situation is shown in Table 6

\begin{table}[!hbpt]
\caption{ 1000 voters vote for 8 candidates} \label{t06}
\begin{center}
\begin{tabular}{cccccccccc} \hline

          &$P_1$&$P_2$&$P_3$&$P_4$&$P_5$&$P_6$&$P_7$&Select \\ \cline{2-8}
  Voter   &2    &3    &5    &7    &11   &13   &17   &Random number   \\ \hline
  $L_1$   &2    &1    &1    &7    &1    &13   &1    &2,3,5\\
  $L_2$   &1    &1    &5    &7    &1    &1    &1    &3,7,17\\
  \vdots  &\vdots&\vdots&\vdots&\vdots&\vdots&\vdots&\vdots&\vdots   \\
  $L_4$   &1    &1    &1    &7    &1    &1    &17   &7,11,13,17\\ \hline

\end{tabular}
\end{center}
\end{table}

(2)	Casting

Step 1: $L_i (i=1,2,\ldots,1000)$ gets $s_{ij}$ and sends it to $L_j (j=1,2,\ldots,n,j\neq i)$ by secure channel, the transfer matrix is shown in Figure.4.

\begin{figure}[!hbpt]
\begin{center}
  \includegraphics[width=40mm,height=30mm]{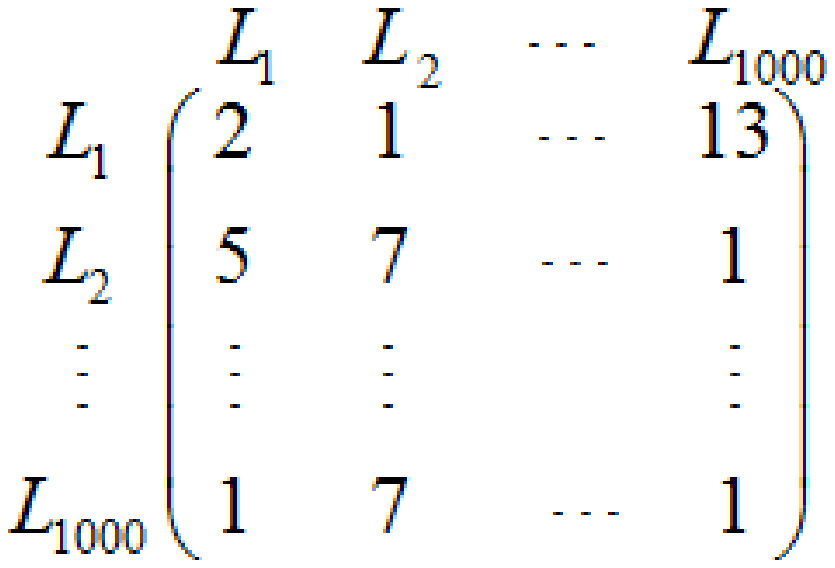}
  \caption{The transfer matrix}
  \label{f005}
\end{center}
\end{figure}

Step 2: After calculating the $ith$ list, $L_i$ gets the result $v_i$ and broadcasts by checking table 4, the result is following:
$$L_1=2^1\cdot 5^1\cdots 2^{-1}\cdot3^{-1}\cdot5^{-1}$$
$$L_2=7^2 \cdots 3^{-1}\cdot 7^{-1}\cdot17^{-1}$$
$$\vdots$$
$$L_1000=13^1\cdots 7^{-1}\cdot11^{-1}\cdot13^{-1}\cdot17^{-1}$$

(3)	Tallying

The amount of vote is $T=\prod_{i=1}^{n}v_i=v_1v_2\cdots v_{1000}=2^{x_1}\cdot3^{x_2}\cdot5^{x_3}\cdot7^{x_4}\cdot11^{x_5}\cdot13^{x_6}\cdot17^{x_7}$, so the vote of $P_1 P_2\cdots P_7 $ are $x_1 x_2\cdots x_7 $ respectively.

According to the analysis of example, the aim of unlimited participants is achieved in the protocol.

\section{Analysis of performance}

\begin{theorem}
 The result is correct based on $G\ddot{o}del$ encoding.~\label{th1}
\end{theorem}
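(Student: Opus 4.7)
The plan is to show that the final product $T=\prod_{i=1}^{n}v_i$ equals $\prod_{j=1}^{m} p_j^{x_j}$, where $p_j$ is the prime assigned to candidate $C_j$ and $x_j$ is the true number of voters who approved $C_j$. Once this product representation is established, the theorem reduces to the fundamental theorem of arithmetic invoked in Section 2: the exponents $x_j$ can be uniquely recovered from $T$ by prime factorization, so every voter can read off the correct per-candidate tallies.

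First I would fix notation. Let $A_i\subseteq\{1,\ldots,m\}$ denote the approval set of voter $L_i$, so that the blinded row value distributed by $L_i$ is $V_i=R_i\cdot\prod_{j\in A_i}p_j$, where $R_i$ is the product of the blinding primes chosen by $L_i$ from $\{p_1,\ldots,p_m\}$. The two structural identities driving the proof are: (a) each row of the transfer matrix multiplies back to the original blinded ballot, $\prod_{k=1}^{n}s_{ik}=V_i$; and (b) each voter $L_i$ computes $v_i=R_i^{-1}\prod_{k=1}^{n}s_{ki}$, i.e.\ the product of the $i$th column times the inverse of their \emph{own} blinding factor.

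The main step is then a Fubini-style interchange of a double product,
$$\prod_{i=1}^{n}\prod_{k=1}^{n}s_{ki}\;=\;\prod_{k=1}^{n}\prod_{i=1}^{n}s_{ki}\;=\;\prod_{k=1}^{n}V_k,$$
valid because multiplication in $F_p$ is commutative. Plugging this into $T=\prod_i v_i$, the factor $\prod_{i}R_i^{-1}$ contributed by each voter cancels the factor $\prod_{k}R_k$ emerging from $\prod_{k}V_k$, so what survives is $\prod_{k=1}^{n}\prod_{j\in A_k}p_j$. Regrouping this product by candidate gives $\prod_{j=1}^{m}p_j^{x_j}$ with $x_j=|\{k:j\in A_k\}|$, which is exactly the Gödel encoding of the tally vector.

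To close, I would appeal to unique factorization (restated in the Gödel coding subsection): since $p_1,\ldots,p_m$ are distinct primes, $T$ determines each $x_j$ uniquely, so every voter recovers the same tally from the broadcast value. The only real obstacle is bookkeeping — making sure the sender/receiver indices line up correctly between rows and columns of the transfer matrix, and that the $R_i^{-1}$ each voter applies really is matched to the $R_i$ they injected into $V_i$. Once the indexing is unambiguous, correctness follows immediately from commutativity of multiplication and unique factorization, so no deeper cryptographic or number-theoretic machinery is needed.
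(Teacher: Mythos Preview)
Your proposal is correct and follows essentially the same approach as the paper: interchange the double product $\prod_i\prod_k s_{ki}=\prod_k\prod_i s_{ki}$ by commutativity, then invoke the uniqueness of prime factorization underlying G\"odel encoding to read off the tallies. Your write-up is in fact more complete than the paper's own proof, which states the double-product identity and the G\"odel correspondence but leaves the cancellation of the blinding factors $R_i$ against $R_i^{-1}$ and the per-candidate regrouping implicit.
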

\begin{proof}
According to Figure.2, a transfer matrix is got by randomly splitting the ballot. $\prod_{i=1}^{n}\prod_{j=1}^{n}s_{ij}=\prod_{j=1}^{n}\prod_{i=1}^{n}s_{ij}=T$,so the product of all elements does not change before sending vote and after receiving vote. The $G\ddot{o}del$ encoding sets up a sole corresponding relation for $T$ and limited positive integer sequence. According to the relation, the voting result is got. So the result is right.
\end{proof}

\begin{theorem}
 If all voters are semi-honest, the e-voting scheme with unlimited participants is secure and feasible.~\label{th2}
\end{theorem}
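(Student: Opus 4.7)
My plan is to verify each of the six requirements listed in Definition 2, since ``secure and feasible'' is defined as their conjunction. The three easiest are structural. Unlimited participants holds because, unlike the Sun--Liu scheme, the ballot size depends only on $m$ through the fixed prime representatives $p_1,\dots,p_m$ and never on $n$; the $F_p$ multiplications cost the same no matter how many voters join. Dispute-freeness follows immediately from Theorem~\ref{th1}: since every $v_i$ is broadcast, any participant can recompute $T=\prod v_i$ and read off $(x_1,\dots,x_m)$ by unique prime factorization, which doubles as the verification mechanism. Robustness in the semi-honest setting is automatic, since a passive attacker by definition does not deviate from the protocol and the transfer matrix makes any missing share visible to its intended recipient. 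Fairness holds because $v_i$ cannot be formed until the $i$th column of the transfer matrix is complete, so no single voter can evaluate $T$ before the broadcast phase.

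The two properties requiring real work are perfect ballot secrecy and receipt-freeness, and I would handle them together. For ballot secrecy, let $t<n$ voters collude and fix a target voter $L_i$ outside the coalition. The coalition sees only the $t$ shares $s_{ij}$ addressed to members $L_j$ inside the coalition, together with the publicly broadcast $v_i$; the remaining $n-t$ shares travel on secure channels into honest hands and are never seen. The goal is to show that every candidate selection of $L_i$ is consistent with the coalition's view. The natural argument is to exhibit, for any two admissible selections $A$ and $B$ that $L_i$ could have made, a compensating modification of the unseen shares together with a replacement random prime $R_i'$ that leaves every quantity seen by the coalition unchanged and still produces the announced $v_i$. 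Because the missing shares can freely absorb any combination of candidate primes and of $R_i^{\pm 1}$ factors, such a substitution exists, so the coalition's view is statistically identical under the two hypotheses. Receipt-freeness is then an immediate corollary, because a voter's only outputs are private shares and the blinded value $v_i=2^{x_1'}3^{x_2'}\dots p_m^{x_m'}R_i^{-1}$, neither of which can be opened to a third party without the cooperation of all remaining voters.

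The hard part will be making the secrecy argument genuinely information-theoretic rather than heuristic: I must check that the map between the profile-$A$ and profile-$B$ views of the coalition is well defined on the joint space of random shares and random primes $R_i$, preserves the underlying distribution, and matches the broadcast $v_i$ in every coordinate. Once that bijection is in place the coalition learns nothing about honest voters beyond what $T$ already reveals, all six conditions of Definition 2 are simultaneously satisfied, and the theorem follows.
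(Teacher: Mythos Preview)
Your approach matches the paper's: both proceed by verifying the properties of Definition~2 one by one. The paper's arguments are considerably more informal than yours---for perfect ballot secrecy it simply asserts that the random $R_i$ conceals the ballot and that only $n-1$ colluders could recover it, without anything like the bijection-of-views argument you sketch; for receipt-freeness it likewise just notes that the vote is split into $n$ shares so no single recipient can reconstruct it. Your plan to build an explicit information-theoretic simulation between two candidate ballots $A$ and $B$ is more rigorous than what the paper offers, and you also address fairness explicitly, which the paper's proof actually omits (it checks only five of the six items in Definition~2). So your proposal is correct, follows the same decomposition as the paper, and is in fact more careful on the privacy properties than the original.
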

\begin{proof}
The following ideas reflectively prove that our program meets all security properties raised by defination 2

(1) Unlimited participants: The ballot format adopts the form of decimalism and it has not relation with the amount of participants. In theory, the amount of candidates only relates to the number of primes. For example, if the election has 1 billion candidates, 1 billion primes are found. Then according to the protocol, the voting result is got.

(2) Perfect ballot secrecy: $L_i (i=1,2,\ldots,n)$ randomly selects the random number $R_i$ to conceal the ballot before sending the vote. Only when $n-1$ voters are collaborated, the secret ballot of voter is calculated by the protocol. So the protocol can resist the attack from $n-1$ passive attackers. When $t(t<n)$  voters are collaborated, the voters only can get their own ballot and cannot get other voting result. So perfect ballot secrecy can be achieved.

(3) Receipt-free: By the random number, the ballot is concealed and $s_{ij}$ is split into n parts and is sent to other voters. Any voter cannot reconstruct the complete vote, because $L_j$ only can receive $s_{ij}$ from $L_i$. So the vote is receipt-free.

(4) Robustness: When $t(t<n)$  has been abstained from election, the vote data cannot be sent in casting. In tallying, the process still can be started by cooperation of remaining voters.

(5) Dispute-freeness: The voting result is tallied by all voters, if the result of one or a few voters is not same with most voters, these voters are dishonest persons. Moreover, by making public all $v_i^{'}$ we also can test and verify and track the dishonest voters.

\end{proof}

\subsection{ Scheme comparison}

The scheme need not using binary sequence to express vote, which avoids the situation of digit overflow, at the same time the ballot structure does not have relation with the amount participants. So the protocol can achieve the situation of unlimited participants. Our solution only needs to send selected prime number and random number to $L_j (j=1,2,\ldots ,n,j \neq i)$, then $L_i (i=1,2,\ldots,n)$ gets the result by arranging the data of the $ith$ list and broadcast to other voters. In contrast with the Sun-Liu scheme, our solution not only satisfies the security of e-voting but also improves computing efficiency to achieve unlimited participants.

\section{Conclusion}

The protocol uses the $G\ddot{o}del$ coding to solve the problem of low efficiency in previous schemes. In this paper, the scheme does not rely on traditional encryption algorithms to ensure the security of ballot and does not contain any scabrous computational problem. In information theory, the protocol is secure. The e-voting can be applied to election of any scale by the scheme, for example large-scale electronic auction, electronic tending, et al.
	Although the problem for number of restrictions is solved, the insecure channel and dishonest voter need to be resolved. The voter selects secret random numbers need to be proved legal and the security of electoral package need to be further researched. It is possible (and is part of work in progress) that extensions of our ideas can give a scheme with insecure channel and dishonest voter while maintaing security, robustness and efficiency.

\section*{Acknowledgments}
This study was supported by the National National Science Foundation of China (Grant Nos.61363069 ), The Innovation Project of GUET Graduate Education (2016YJCX44) and Guangxi Key Laboratory of Cryptography and Information Security.

We thank yining Liu, Huiyong Wang and Lei Li for helpful comments and discussions.

\addcontentsline{toc}{chapter}{\protect\numberline{}{REFERENCES}}
\bibliography{ijns}
\bibliographystyle{unsrt}

\noindent {\bf Xi Zhao} is currently pursuing his M.S. degree in Guilin University of Electronic Technology, Guilin China. His current research interests include information security and image security.
\\

\noindent {\bf Yong Ding } as born in Chongqing, China. He received the
Ph.D. degrees from the School of Communication Engineering,
Xidian University, Shaanxi, in 2005 . He is currently professor
in Guilin University of Electronic Technology, China. His
current research interests include cryptography and information
security.\\

\noindent {\bf Quanyu Zhao} is currently pursuing his M.S. degree
in Guilin University of Electronic Technology, Guilin,
China. He received the B.S. degree in Mathematics and
Applied Mathematics from Huainan Normal University,
Anhui, China, in 2014. His research interests focus on
e-voting, micropayment, e-lottery, group key transfer,
and oblivious transfer.

\end{document}